\DeclareMathOperator{\E}{\mathbb{E}}
\DeclareMathOperator*{\diag}{diag}
\DeclareMathOperator*{\tr}{tr}
\newcommand{\ip}[2]{\left\langle#1,#2\right\rangle}
\newcommand{\norm}[1]{\left\lVert#1\right\rVert}
\renewcommand{\(}{\left(}
\renewcommand{\)}{\right)}
\renewcommand{\[}{\left[}
\renewcommand{\]}{\right]}
\def \P {\mathbb{P}}
\def \R {\mathbb{R}}
\def \e {\varepsilon}
\def \l {\lambda}
\def \< {\langle}
\def \> {\rangle}
\def \va {\bm{a}}
\def \vb {\bm{b}}
\def \vx {\bm{x}}
\def \vu {\bm{u}}
\def \vy {\bm{y}}
\def \vX {\bm{X}}
\def \vY {\bm{Y}}
\def \vA {\bm{A}}
\def \vB {\bm{B}}
\def \vD {\bm{D}}
\def \vW {\bm{W}}
\def \vI {\bm{I}}
\def \vU {\bm{U}}
\def \vP {\bm{P}}
\def \vT {\bm{T}}
\def \vSigma {\bm{\Sigma}}
\def \vLambda {\bm{\Lambda}}
\def \vTheta  {\bm{\Theta}}
\theoremstyle{plain}
\newtheorem{theorem}{Theorem}
\newtheorem{proposition}{Proposition}
\newtheorem{corollary}{Corollary}
\newtheorem{definition}{Definition}
\newtheorem{fact}{Fact}
\theoremstyle{remark}
\newtheorem{remark}{Remark}
\newtheorem{example}{Example}
\begin{document}
\title{Covariance Matrix Estimation from Linearly-Correlated Gaussian Samples}

\author{Wei~Cui, Xu~Zhang, and Yulong~Liu
\thanks{This work was supported by the National Natural Science Foundation of China under Grants 61301188 and 61672097. (Corresponding author: Yulong Liu.)}
\thanks{W.~Cui and X.~Zhang are with the School of Information and Electronics, Beijing Institute of Technology, Beijing 100081, China (e-mail: cuiwei@bit.edu.cn; connorzx@bit.edu.cn).}
\thanks{Y.~Liu is with the School of Physics, Beijing Institute of Technology, Beijing 100081, China (e-mail: yulongliu@bit.edu.cn).}
}

\maketitle

\begin{abstract}
Covariance matrix estimation concerns the problem of estimating the covariance matrix from a collection of samples, which is of extreme importance in many applications. Classical results have shown that $O(n)$ samples are sufficient to accurately estimate the covariance matrix from $n$-dimensional independent Gaussian samples. However, in many practical applications, the received signal samples might be correlated, which makes the classical analysis inapplicable. In this paper, we develop a non-asymptotic analysis for the covariance matrix estimation from { linearly-}correlated Gaussian samples. Our theoretical results show that the error bounds are determined by the signal dimension $n$, the sample size $m$, and the shape parameter of the distribution of the correlated sample covariance matrix. Particularly, when the shape parameter is a class of Toeplitz matrices (which is of great practical interest), $O(n)$ samples are also sufficient to faithfully estimate the covariance matrix from correlated samples. Simulations are provided to verify the correctness of the theoretical results.
\end{abstract}

\begin{IEEEkeywords}
Covariance matrix estimation, correlated samples.
\end{IEEEkeywords}

%
\IEEEpeerreviewmaketitle

\section{Introduction}

\IEEEPARstart{E}{stimating} covariance matrices becomes fundamental problems in modern multivariate analysis, which finds applications in many fields, ranging from signal processing\cite{krim1996two} and machine learning \cite{friedman2001elements} to statistics \cite{cai2016estimating} and finance \cite{fan2016overview}. In particular, important examples in signal processing include Capon's estimator \cite{capon1969high}, MUltiple SIgnal Classification (MUSIC) \cite{schmidt1986multiple}, Estimation of Signal Parameter via Rotation Invariance Techniques (ESPRIT) \cite{roy1989esprit}, and their variants \cite{krim1996two}.

During the past few decades, there have been numerous works devoted to studying the optimal sample size $m$ that suffices to estimate the covariance matrix from $n$-dimensional independent samples \cite{bai1993limit,aubrun2007sampling,vershynin2010into,adamczak2010quantitative,adamczak2011sharp,vershynin2012close,srivastava2013covariance,koltchinskii2017concentration}. For instance, Vershynin \cite{vershynin2010into} has shown that $m=O(n)$ samples are sufficient for independent sub-Gaussian samples, where $O(n)$ denotes that the order of growth of the samples is a linear function of the dimension $n$; Vershynin \cite{vershynin2012close} also illustrates that $O(n\log n)$ samples are required for independent heavy tailed samples; and Srivastava et al. \cite{srivastava2013covariance} have established that $O(n)$ is the optimal bound for independent samples which obey log-concave distributions.

However, in many practical applications of interest, it is very hard to ensure that the received signal samples are independent of each other. For example, in signal processing, the signal sources might be in multipath channel \cite{ramirez2010detection,huang2013detection} or interfere with each other \cite{shiu2000fading,liu2007training}, which causes the received samples correlated. In  portfolio management and risk assessment, the returns between different assets are correlated on short time scales, i.e., the Epps effect \cite{epps1979comovements,munnix2010impact}. Then a natural question to ask is:

\emph{Is it possible to use correlated samples to estimate the covariance matrix? If possible, how many correlated samples do we need to obtain a good estimation of the covariance matrix?}

This paper focuses on the above question and provides some related theoretical results. More precisely, we establish non-asymptotic error bounds for covariance matrix estimation from { linearly-}correlated Gaussian samples in both expectation and tail forms. These results show that the error bounds are determined by the signal dimension $n$, the sample size $m$, and the shape parameter $\vB$ of the distribution of the correlated sample covariance matrix. {  In particular, if the shape parameter is a class of Toeplitz matrices (see Section \ref{sec: CECC}.C Example 2 for details), where the shape parameter $\vB$ satisfies $\tr(\vB)=m$, $||\vB||_{F}=O(m^{1/2})$, and $||\vB||=O(1)$,} our results reveal that the correlated case has the same order of error rate as the independent case albeit with a larger multiplicative coefficient.

The remainder of this paper is organized as follows. The problem is formulated in Section \ref{sec: problem formulation}. The performance analysis of the covariance matrix estimation from { linearly-}correlated Gaussian samples is presented in Section \ref{sec: CECC}. Simulations are provided in Section \ref{sec: Simulation}, and conclusion is drawn in Section \ref{sec: Conclusion}.

\section{problem formulation} \label{sec: problem formulation}
 Let $\vx \in \R^n$ be a centered Gaussian vector with the covariance matrix $\vSigma=\E [\bm{x} \bm{x}^T]$, where $\vSigma \in \R^{n \times n}$ is a positive definite matrix.  Let $\vx_1,\ldots,\vx_m \in \R^n$ be independent copies of $\vx$. Suppose we observe $m$ {linearly}-correlated samples $\{\bm{y}_k\}_{k=1}^{m}$
 $$
 \vY=\vX \bm{\Lambda},
 $$
where $\bm{Y}=[\vy_1,\ldots,\vy_m]$, $\vX=[\vx_1,\ldots,\vx_m]$, and   $\vLambda \in \R^{m \times m}$  is a fixed matrix. The objective is to estimate the covariance matrix
$\vSigma$ from correlated samples $\{\bm{y}_k\}_{k=1}^{m}$. Here we assume that $m \ge n$
\footnote {{When $m < n$, in order to estimate covariance matrices, we require some kinds of prior information about covariance matrices. During the past few decades, a large number of estimators have been proposed to solve the problems in this setting. For example, one group is structured estimators, which impose additional structures on covariance matrices (see \cite{cai2016estimating} and references therein). Typical examples of structured covariance matrices include bandable covariance matrices \cite{bickel2008regularized}, Toeplitz covariance matrices \cite{cai2013optimal}, sparse covariance matrices \cite{bickel2008covariance,rothman2009generalized} and so on. Another group is to shrinkage the sample covariance matrix to a ``target'' matrix by incorporating some regularization \cite{carlson1988covariance,ledoit2004well,chen2010shrinkage,coluccia2015regularized}. The general form of shrinkage estimators is
\begin{equation*}
\hat{\vSigma}^{\text{SH}}=\alpha \vSigma_0 + (1-\alpha) \hat{\vSigma},
\end{equation*}
where $\vSigma_0 \in \R^{n \times n}$ is the shrinkage ``target'' matrix with positive definite structure, $\hat{\vSigma}$ denotes the sample covariance matrix, and $\alpha \in [0,1]$ is an absolute constant.
The third group of estimators is based on spectrum correction. In this group, spectrum correction approaches are utilized to infer a mapping from the sample eigenvalues to corrected eigenvalue estimates which yield a superior covariance matrix, see e.g., \cite{stein1986lectures,el2008spectrum,ledoit2011eigenvectors}.}
}.

The standard approach under correlated samples utilizes the correlated sample covariance matrix to approximate the actual one (see, e.g., \cite{collins2013compound, burda2011applying, chuah2002capacity})
\begin{equation} \label{eq: SCM}
	\hat{\vSigma}=\frac{1}{m} \sum \limits_{k=1}^m \vy_k \vy_k^T = \frac{1}{m}\vX \vLambda \vLambda^T \vX^T := \frac{1}{m}\vX \vB \vX^T.
\end{equation}
Our problem then becomes to investigate how many correlated samples are enough to estimate $\vSigma$ accurately from $\hat{\vSigma}$. It is not hard to find that the correlated sample covariance matrix $\hat{\vSigma}$ is a compound Wishart matrix\footnote{Let $\vx_1,\ldots,\vx_m \sim \mathcal{N}(\bm{0},\vSigma)$ be independent Gaussian vectors, and let $\vB$ be an arbitrary real $m \times m$ matrix. We say that a random $n \times n$ matrix $\vW$ is a compound Wishart matrix  with shape parameter $\vB$ and scale parameter $\vSigma$ if $\vW = \frac{1}{m}\vX \vB \vX^T$, where $\vX = [\vx_1,\ldots,\vx_m]$.} with shape parameter $\vB = \vLambda \vLambda^T$ and scale parameter $\vSigma$ \cite{speicher1998combinatorial}.

For the convenience of comparison, we restate a typical result for covariance matrix estimation from independent Gaussian samples as follows. This result indicates that $O(n)$ samples are sufficient to estimate the covariance matrix accurately from independent Gaussian samples. It is natural to expect that we require at least $O(n)$ samples to estimate the covariance matrix from correlated samples.

\begin{proposition}[Theorem 4, \cite{koltchinskii2017concentration}] \label{pp: sg}
 Let $\vx$ be a centered $n$-dimensional Gaussian vector with the covariance matrix $\vSigma=\E[\vx \vx^T]$,  and let $\vx_1,\ldots,\vx_m \in \R^n$ be independent copies of $\vx$. Then the sample covariance matrix $\widetilde{\vSigma}=\frac{1}{m} \sum_{k=1}^m \vx_k \vx_k^T$ satisfies
$$\E||\widetilde{\vSigma}-\vSigma|| \le C \( \sqrt{\frac{n}{m}}+\frac{n}{m}\) \norm{\vSigma},$$
where $\norm{\cdot}$ denotes the spectral norm and $C$ is an absolute constant.
\end{proposition}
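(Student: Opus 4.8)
The plan is to reduce to the isotropic case and then to control the extreme singular values of a Gaussian matrix. The first step is whitening: writing $\vx_k=\vSigma^{1/2}\vg_k$ with $\vg_k$ i.i.d.\ $\mathcal N(\bm 0,\vI_n)$, one has $\widetilde{\vSigma}-\vSigma=\vSigma^{1/2}\bigl(\tfrac1m\sum_{k=1}^m\vg_k\vg_k^{\tran}-\vI_n\bigr)\vSigma^{1/2}$, so that $\norm{\widetilde{\vSigma}-\vSigma}\le\norm{\vSigma}\cdot\norm{\tfrac1m\sum_{k}\vg_k\vg_k^{\tran}-\vI_n}$. Renaming, it therefore suffices to prove the bound in the isotropic case $\vSigma=\vI_n$: show that $\E\norm{\widetilde{\vSigma}-\vI_n}\le C(\sqrt{n/m}+n/m)$ when the columns $\vx_1,\dots,\vx_m$ of $\vX$ are i.i.d.\ $\mathcal N(\bm 0,\vI_n)$, i.e.\ $\widetilde{\vSigma}=\tfrac1m\vX\vX^{\tran}$.

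Next, the eigenvalues of $\widetilde{\vSigma}=\bigl(\tfrac1{\sqrt m}\vX\bigr)\bigl(\tfrac1{\sqrt m}\vX\bigr)^{\tran}$ are exactly the squared singular values of $\tfrac1{\sqrt m}\vX$, so $\norm{\widetilde{\vSigma}-\vI_n}=\max\{s_{\max}^2-1,\,1-s_{\min}^2\}$, where $s_{\min}\le s_{\max}$ are the smallest and largest singular values of $\tfrac1{\sqrt m}\vX$. I would then invoke two standard facts: (i) the Gordon / Davidson--Szarek mean bounds $\E\,s_{\max}\le 1+\sqrt{n/m}$ and $\E\,s_{\min}\ge 1-\sqrt{n/m}$ (this is where the hypothesis $m\ge n$ is used); and (ii) Gaussian concentration --- the maps $\vX\mapsto s_{\max}(\tfrac1{\sqrt m}\vX)$ and $\vX\mapsto s_{\min}(\tfrac1{\sqrt m}\vX)$ are $\tfrac1{\sqrt m}$-Lipschitz on $(\R^{n\times m},\norm{\cdot}_F)$, so $s_{\max}$ and $s_{\min}$ each concentrate about their means at sub-Gaussian scale $1/\sqrt m$. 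Writing $s_{\max}=1+\sqrt{n/m}+Z$ with $\E Z_+\lesssim 1/\sqrt m$ and $\E Z_+^2\lesssim 1/m$, one has $(s_{\max}^2-1)_+\le(\sqrt{n/m}+Z_+)(2+\sqrt{n/m}+Z_+)$ and similarly $(1-s_{\min}^2)_+\le 2(1-s_{\min})_+$ with $\E(1-s_{\min})_+\lesssim\sqrt{n/m}$; taking expectations and bounding the max by the sum collapses everything to $O(\sqrt{n/m}+n/m)$, since $1/\sqrt m\le\sqrt{n/m}$ and $1/m\le n/m$ for $n\ge1$.

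The one genuine obstacle is that $\vX\mapsto\norm{\tfrac1m\vX\vX^{\tran}-\vI_n}$ is quadratic in $\vX$ and hence not Lipschitz, so Gaussian concentration cannot be applied to it directly; routing everything through the singular values of $\vX$ itself --- which \emph{are} $1$-Lipschitz --- is precisely what circumvents this, at the cost of having to handle the map $s\mapsto s^2$ by hand at the level of first and second moments, as above. An alternative that avoids the singular-value asymptotics is a covering argument: fix a $\tfrac14$-net $\mathcal T$ of $S^{n-1}$ with $|\mathcal T|\le 9^n$, use $\norm{\widetilde{\vSigma}-\vI_n}\le 2\max_{\vu\in\mathcal T}\bigl|\tfrac1m\sum_{k=1}^m(\ip{\vx_k}{\vu}^2-1)\bigr|$, bound each fixed-$\vu$ average (a centered sum of i.i.d.\ $\chi^2_1$ random variables) by a Bernstein inequality, union-bound over $\mathcal T$ to obtain $\Pr{\norm{\widetilde{\vSigma}-\vI_n}\ge 2t}\le 2\cdot 9^n\exp\bigl(-cm\min\{t^2,t\}\bigr)$, and integrate this tail from $t\asymp\max\{\sqrt{n/m},\,n/m\}$. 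This reproduces the same bound (and, with a more careful $\vSigma$-adapted chaining in place of the crude net, even yields the effective rank $\tr(\vSigma)/\norm{\vSigma}$ instead of $n$); being purely moment-based, it is also the form most readily carried over to the correlated, compound-Wishart setting treated in the rest of the paper.
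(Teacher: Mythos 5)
The paper gives no proof of Proposition~\ref{pp: sg}---it is quoted as Theorem 4 of \cite{koltchinskii2017concentration} purely as a benchmark---so the right comparison is with the machinery the paper does develop, namely Theorem~\ref{lm: wishart1}, which with $\vB=\vI_m$ recovers exactly this statement (Example 1, inequality \eqref{leq: I}). Your argument is correct. Your primary route---whiten to $\vSigma=\vI_n$, pass to the extreme singular values of $m^{-1/2}\vX$, and combine the Gordon/Davidson--Szarek mean bounds with Borell--TIS concentration for the $1$-Lipschitz maps $s_{\max},s_{\min}$---is sound: routing around the non-Lipschitz quadratic $\|\frac1m\vX\vX^{\tran}-\vI_n\|$ by squaring the singular values by hand, controlling first and second moments of the positive parts, is exactly the right fix, and the final bookkeeping ($1/\sqrt m\le\sqrt{n/m}$, $1/m\le n/m$) closes the bound. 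This route is, however, tied to the Gram-matrix structure of $\vX\vX^{\tran}$: it has no analogue for a general symmetric shape matrix $\vB$ with eigenvalues of mixed sign, since $\vX\vB\vX^{\tran}$ is then not expressible through singular values of a single Gaussian matrix. That is precisely why the paper instead uses the $\e$-net/Bernstein argument you sketch as your ``alternative''---a $1/4$-net of $\mathbb{S}^{n-1}$ of cardinality at most $9^n$, Bernstein's inequality for the centered weighted $\chi^2$ averages, a union bound, and integration of the tail---which is the content of Appendix~\ref{sec: Proof1} and degrades gracefully to the weighted setting. Your closing remark that a $\vSigma$-adapted refinement yields the effective rank $\tr(\vSigma)/\norm{\vSigma}$ in place of $n$ is also accurate: that is the actual statement of the cited theorem, of which the proposition as printed is a weakening.
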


\section{Covariance Matrix Estimation From { Linearly-}Correlated Gaussian Samples} \label{sec: CECC}

In this section, we present our main results for the covariance matrix estimation from { linearly-}correlated Gaussian samples. Our proof strategy is divided into two steps. First, we establish a key theorem which illustrates that the correlated sample covariance matrix
$$
\hat{\vSigma}=\frac{1}{m}\vY\vY^T=\frac{1}{m}\vX \vB \vX^T
$$
concentrates around its mean $\E \hat{\vSigma}$ with high probability. We then establish the non-asymptotic error bounds for the estimated covariance matrix in both expectation and tail forms.

\subsection{Concentration of {Linearly-}Correlated Sample Covariance Matrix}
\begin{theorem} \label{lm: wishart1}
	Let $\vx_1,\ldots,\vx_m \sim \mathcal{N}(\bm{0},\vSigma)$ be independent Gaussian vectors, where $\vSigma$ is an $n \times n$ real positive definite matrix. Let $\vB$ be a fixed symmetric real $m \times m$ matrix. Consider the compound Wishart matrix $\vW= \vX \vB \vX^T/m$ with $\vX=[\vx_1,\ldots,\vx_m]$. {Then for any $\delta \geq 0$, the following event
	$$\norm{\vW-\E\vW}  \le \frac{32\norm{\vB}_F\delta+64\norm{\vB}\delta^2}{m} \norm{\vSigma}$$
	holds with probability at least $1-2\exp(-2\delta^2+2n\log3)$, where $\norm{\cdot}_{F}$ denotes the Frobenius norm.}
	Furthermore,
	\begin{equation} \label{neq: E1}
	\E\norm{\vW-\E\vW } \le \frac{72 \norm{\vB}_{F} \sqrt{n} + 282 \norm{\vB} n}{m}\norm{\vSigma}.
	\end{equation}
\end{theorem}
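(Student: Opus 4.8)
The plan is a standard $\varepsilon$-net argument that reduces the spectral-norm deviation to a scalar quadratic-form concentration, followed by an integration step for the expectation bound. Since $\vB$ is symmetric, the matrix $\vW-\E\vW$ is symmetric, so for any $1/4$-net $\CC$ of the unit sphere $S^{n-1}$ (which can be taken with $|\CC|\le 9^n$) a standard covering argument gives $\norm{\vW-\E\vW}\le 2\max_{\vu\in\CC}|\vu^\tran(\vW-\E\vW)\vu|$. Thus it suffices to obtain a tail bound for the scalar $\vu^\tran(\vW-\E\vW)\vu$ with a fixed unit $\vu$ and then union-bound over $\CC$; note that $|\CC|\le 9^n=e^{2n\log 3}$ is exactly what produces the $2n\log 3$ term in the exponent.

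Fix $\vu\in S^{n-1}$ and write $\vu^\tran\vW\vu=\tfrac1m(\vX^\tran\vu)^\tran\vB(\vX^\tran\vu)$. Because the columns of $\vX$ are i.i.d.\ $\mathcal{N}(\bm0,\vSigma)$, the entries of $\vX^\tran\vu\in\R^m$ are i.i.d.\ $\mathcal{N}(0,\vu^\tran\vSigma\vu)$, i.e.\ $\vX^\tran\vu=\sigma_{\vu}\vg$ with $\vg\sim\mathcal{N}(\bm0,\vI_m)$ and $\sigma_{\vu}^2:=\vu^\tran\vSigma\vu\le\norm{\vSigma}$. Hence $\vu^\tran(\vW-\E\vW)\vu=\tfrac{\sigma_{\vu}^2}{m}\bigl(\vg^\tran\vB\vg-\tr\vB\bigr)$, a centered second-order Gaussian chaos. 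Diagonalizing $\vB$ and splitting it into its positive and negative semidefinite parts (so that $\norm{\vB_{\pm}}_F\le\norm{\vB}_F$ and $\norm{\vB_{\pm}}\le\norm{\vB}$), a Laurent--Massart/Hanson--Wright inequality gives $\Pr{|\vg^\tran\vB\vg-\tr\vB|\ge c_1\norm{\vB}_F\sqrt{x}+c_2\norm{\vB}x}\le 2e^{-x}$ for all $x\ge 0$ and absolute constants $c_1,c_2$. Using $\sigma_{\vu}^2\le\norm{\vSigma}$, setting $x=2\delta^2$, and combining the factor-$2$ net inequality with the union bound over the $9^n$ points of $\CC$ yields the stated tail bound (the explicit constants $32$ and $64$ being artifacts of the particular inequalities used).

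The expectation bound then follows by integrating the tail. Writing $Z=\norm{\vW-\E\vW}$, $a=32\norm{\vB}_F\norm{\vSigma}/m$ and $b=64\norm{\vB}\norm{\vSigma}/m$, we have $\Pr{Z\ge a\delta+b\delta^2}\le 2e^{-2\delta^2+2n\log 3}$. Applying $\E Z=\int_0^\infty\Pr{Z>s}\,ds$ with the change of variables $s=a\delta+b\delta^2$ and splitting the range of $\delta$ at a threshold $\delta_0$ of order $\sqrt{n\log 3}$ (above which the tail estimate is nontrivial and decays like $e^{-\delta^2}$), one bounds the low range by $a\delta_0+b\delta_0^2$ using $\Pr\le 1$ and the high range by a convergent Gaussian integral of size $O(a+b)$. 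This gives $\E Z\le a\delta_0+b\delta_0^2+O(a+b)=O(a\sqrt n+bn)$; choosing the threshold and estimating the integral carefully produces the constants $72$ and $282$.

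The main obstacle is the quadratic-form step: one needs the two-regime (``sub-gamma'') concentration in which the sub-Gaussian scale is controlled by $\norm{\vB}_F$ and the sub-exponential scale by $\norm{\vB}$ \emph{simultaneously}, with explicit constants, and valid for an arbitrary symmetric (possibly indefinite) $\vB$; the positive/negative decomposition is what makes this clean. The net reduction and the tail-to-expectation integration are routine, and the only care required there is the bookkeeping of constants so that the final coefficients match those in the statement.
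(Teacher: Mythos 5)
Your proposal is correct and follows essentially the same route as the paper: a $1/4$-net reduction with $9^n$ points producing the $2n\log 3$ term, a two-regime tail for the scalar quadratic form controlled by $\norm{\vB}_F$ and $\norm{\vB}$, a union bound over the net, and integration of the tail split at a threshold of order $\sqrt{n}$. The only difference is cosmetic: where you invoke Hanson--Wright/Laurent--Massart for the Gaussian chaos $\vg^\tran\vB\vg-\tr\vB$, the paper obtains the same bound by diagonalizing $\vB$, using orthogonal invariance of the Gaussian matrix to reduce $\vW$ to the independent weighted sum $\frac{1}{m}\sum_i\lambda_i(\vx_i\vx_i^\tran-\vI)$, and applying Bernstein's inequality to the sub-exponential variables $\ip{\vx_i}{\vu}^2-1$ with explicitly tracked constants.
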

\begin{proof}
  See { Appendix \ref{sec: Proof1}}.
\end{proof}

\begin{remark}
	It follows from Theorem \ref{lm: wishart1} that the error bounds depend on the signal dimension $n$, the sample size $m$, and the shape parameter $\vB$. In particular, if $||\vB||_{F}=O(m^{1/2})$ and $||\vB||=O(1)$, then this result reveals that $m = O(n)$ samples are sufficient to estimate the compound Wishart matrix $\vW$ accurately.
\end{remark}

\begin{remark}[Symmetric $\vB$]
	The fact that the shape matrix $\vB$ of the correlated sample covariance matrix is symmetric plays a key role in the proof of Theorem \ref{lm: wishart1}. This property enables the compound Wishart matrix $\vW$ to be expressed as a \emph{weighted} sum of independent rank-one matrices. Thus we can employ standard techniques in \cite{vershynin2010into} (e.g., $\varepsilon$-net method and Bernstein's inequality) to establish the error bounds in both expectation and tail forms.

\end{remark}

\begin{remark}[General $\vB$]
For general $\vB$, however, the compound Wishart matrix $\vW$ cannot be expressed as an independent weighted sum, which makes the theoretical analysis much harder.

In \cite{soloveychik2014error}, Soloveychik closely follows a sophisticated strategy developed by Levina and Vershynin \cite{levina2012partial} and establishes the following expectation bound
\begin{multline*}
  \E \norm{\vW-\E\vW}  \\ \le \frac{24 \lceil\log 2n\rceil^2 \sqrt{n} (4\|\vB\| + \sqrt{\pi} \|\vB\|_{F}/\|\vB\|)}{m} \norm{\vSigma}.
\end{multline*}
It is not hard to see that if $||\vB||_{F}=O(m^{1/2})$ and $||\vB||=O(1)$, then this bound shows that $m = O(n\log^4n)$ samples are { sufficient} to estimate the compound Wishart matrix $\vW$ accurately.

In \cite{paulin2016efron}, Paulin et al. employ the method of exchangeable pairs \cite{stein1972bound, stein1986approximate} and establish the concentration of $\vW$ in both expectation and tail forms for the bounded sample matrix $\vX$ (i.e., each entry of $\vX$ is bounded by an absolute positive constant $L$). The expectation bound in \cite{paulin2016efron} is given by
\begin{equation*}\label{leq: Paulin}
	\E ||\vW- \E\vW||  \le  \frac{ 2 \sqrt{ v(\vB) \log n } + 32\sqrt{3} L n \log n ||\bm{B}||}{m},
\end{equation*}
where $v(\vB)=44(n \sigma^2+L^2)\|\vB\|_{F}^2$ and $\sigma$ is the standard deviation of each entry of $\vX$. Clearly, if $||\vB||_{F}=O(m^{1/2})$ and $||\vB||=O(1)$, then this bound establishes that $m = O(n\log n)$ samples suffice to estimate the compound Wishart matrix $\vW$.

In contrast to the above two works, our proof strategy is totally different from theirs. This is because we have exploited the symmetric structure of $\vB$. More importantly, our results improve theirs in the symmetric case. This improvement is critical to obtain the optimal error rate for the covariance matrix estimation from correlated samples.

\end{remark}

\begin{remark}
It is worth pointing out that there is a different line of research which studies the asymptotic behavior of the compound Wishart matrix (e.g., $m \rightarrow \infty$ or $m, n \rightarrow \infty$). Please refer to \cite{bryc2008compound} and references therein for a survey.
\end{remark}

\subsection{Covariance Matrix Estimation from { Linearly-}Correlated { Gaussian} Samples}
We then derive the error bounds for the covariance matrix estimation from { linearly-}correlated { Gaussian} samples.

\begin{theorem} \label{thm: General} Let $\vx_1,\ldots,\vx_m \sim \mathcal{N}(\bm{0},\vSigma)$ be independent random vectors,  where $\vSigma$ is an $n \times n$ real positive definite matrix. Let $\bm{X}=[\vx_1,\ldots,\vx_m] \in \R^{n \times m}$. Consider the correlated samples $\vY=[\vy_1,\ldots,\vy_m]=\vX \vLambda $, where $\vLambda \in \R^{m \times m}$ is a fixed matrix. Let the sample covariance matrix $\hat{\vSigma} = \frac{1}{m} \sum_{k=1}^m \vy_k \vy_k^T$. {Then for any $\delta \geq 0$, the event
	\begin{multline*}
	\norm{\hat{\vSigma}-\vSigma} \le \left| \frac{\tr(\vLambda \vLambda^T)}{m}-1 \right|  ||\vSigma||
	\\ + \frac{32\norm{\vLambda \vLambda^T}_F\delta+64\norm{\vLambda \vLambda^T} \delta^2}{m}\norm{\vSigma}
	\end{multline*}
	holds with probability at least $1-2\exp(-2\delta^2+2n\log3)$.} Furthermore,
	\begin{multline*}
	\E \norm{\hat{\vSigma}-\vSigma} \le  \left| \frac{\tr(\vLambda \vLambda^T)}{m}-1 \right|  ||\vSigma|| \\
	+\frac{72 \norm{\vLambda \vLambda^T}_{F} \sqrt{n} + 282 \norm{\vLambda \vLambda^T} n}{m} \norm{\vSigma}.
	\end{multline*}
\end{theorem}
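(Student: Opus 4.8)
The plan is to reduce Theorem~\ref{thm: General} to Theorem~\ref{lm: wishart1} by a single triangle-inequality split. Writing $\vB=\vLambda\vLambda^T$, the sample covariance matrix $\hat{\vSigma}=\frac{1}{m}\vX\vB\vX^T$ is exactly a compound Wishart matrix with shape parameter $\vB$ and scale parameter $\vSigma$, and crucially $\vB=\vLambda\vLambda^T$ is symmetric (indeed positive semidefinite), so the hypothesis of Theorem~\ref{lm: wishart1} is met with $\vW=\hat{\vSigma}$. Thus
$$\norm{\hat{\vSigma}-\vSigma}\le\norm{\hat{\vSigma}-\E\hat{\vSigma}}+\norm{\E\hat{\vSigma}-\vSigma},$$
and it only remains to identify the deterministic term $\norm{\E\hat{\vSigma}-\vSigma}$ explicitly.

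First I would compute $\E\hat{\vSigma}$. Expanding $\vX\vB\vX^T=\sum_{i,j=1}^{m}B_{ij}\,\vx_i\vx_j^T$ and using that the $\vx_i$ are independent and centered with covariance $\vSigma$, only the diagonal terms survive, since $\E[\vx_i\vx_j^T]=\vSigma\,\mathbf{1}\{i=j\}$. Hence
$$\E\hat{\vSigma}=\frac{1}{m}\sum_{i=1}^{m}B_{ii}\,\vSigma=\frac{\tr(\vB)}{m}\,\vSigma=\frac{\tr(\vLambda\vLambda^T)}{m}\,\vSigma,$$
so that $\norm{\E\hat{\vSigma}-\vSigma}=\bigl|\tr(\vLambda\vLambda^T)/m-1\bigr|\,\norm{\vSigma}$, which is precisely the first summand appearing in both claimed bounds.

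The rest is substitution. For the tail bound, Theorem~\ref{lm: wishart1} gives, for any $\delta\ge0$, that $\norm{\hat{\vSigma}-\E\hat{\vSigma}}\le(32\norm{\vLambda\vLambda^T}_F\delta+64\norm{\vLambda\vLambda^T}\delta^2)\norm{\vSigma}/m$ with probability at least $1-2\exp(-2\delta^2+2n\log3)$; combining with the triangle inequality and the mean computation above yields the stated high-probability event. For the expectation bound, inequality~\eqref{neq: E1} gives $\E\norm{\hat{\vSigma}-\E\hat{\vSigma}}\le(72\norm{\vLambda\vLambda^T}_F\sqrt{n}+282\norm{\vLambda\vLambda^T}n)\norm{\vSigma}/m$, and taking expectations through the triangle inequality (the deterministic term is unaffected) finishes the proof.

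Since Theorem~\ref{lm: wishart1} does all the heavy lifting, there is no genuine obstacle here; the only point that needs care is the mean computation, and in particular the observation that $\vLambda\vLambda^T$ is symmetric, so that the standard estimator $\hat{\vSigma}=\frac{1}{m}\vY\vY^T$ with $\vB=\vLambda\vLambda^T$ falls within the scope of our concentration result even though a general non-symmetric $\vB$ would not.
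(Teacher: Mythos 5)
Your proposal is correct and follows essentially the same route as the paper: a triangle-inequality split into the fluctuation term $\norm{\hat{\vSigma}-\E\hat{\vSigma}}$ (handled by Theorem~\ref{lm: wishart1} with $\vB=\vLambda\vLambda^T$) and the deterministic bias term, together with the computation $\E\hat{\vSigma}=\frac{\tr(\vLambda\vLambda^T)}{m}\vSigma$. The paper carries out the same mean computation entrywise and phrases the tail bound via an explicit threshold $t_0$, but these are only cosmetic differences.
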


\begin{IEEEproof} By the triangle inequality, we have
	\begin{align} \label{mthm: neq_1}
	\mathbb{E}\norm{\hat{\vSigma}-\vSigma}  &=\E \norm{\frac{1}{m}\vY \vY^T-{\vSigma}}\nonumber\\
	&\le \E \norm{\frac{1}{m}\vY \vY^T-\frac{1}{m}\E[\vY \vY^T]} \nonumber\\
	& \quad + \norm{\frac{1}{m}\E[\vY \vY^T]-{\vSigma}}.
	\end{align}
	The first term in \eqref{mthm: neq_1} can be easily bounded by Theorem \ref{lm: wishart1}, i.e.,
	\begin{multline} \label{mthm: neq_2}
	\E \norm{\frac{1}{m}\vY \vY^T-\frac{1}{m}\E[\vY \vY^T]}\\ \le \frac{72 \norm{\vLambda \vLambda^T}_{F} \sqrt{n} + 282 \norm{\vLambda \vLambda^T} n}{m} \norm{\vSigma}.
	\end{multline}
    It suffices to bound the second term in \eqref{mthm: neq_1}. Since the columns of $\vX$ are centered independent Gaussian vectors, direct calculation leads to
	\begin{align*}
	\E[\(\vY \vY^T\)_{ij}]&=\sum_{l,k=1}^{m} (\vLambda\vLambda^T)_{lk} \E \(X_{il} X_{jk}\)\\
	&=\sum_{l=1}^{m} (\vLambda\vLambda^T)_{ll} \E \(X_{il} X_{jl}\)\\
	&=\tr(\vLambda\vLambda^T) \vSigma_{ij},
	\end{align*}
	where $X_{ij}$ denotes the $(i,j)$-th entry of the matrix $\vX$, $i=1,\ldots,n, j=1,\ldots,m$. Thus we have
	\begin{equation} \label{mthm: neq_3}
	\E\[\frac{1}{m}\vY \vY^T\]=\frac{\tr(\vLambda \vLambda^T)}{m} \vSigma.
	\end{equation}
	Substituting \eqref{mthm: neq_2} and \eqref{mthm: neq_3}  into \eqref{mthm: neq_1} yields the expectation bound.

    To establish the tail bound, observe that
    \begin{align*}
		\P \(\norm{\hat{\vSigma}-\vSigma} \ge t\) &\le \P \(\norm{\hat{\vSigma}-\E\hat{\vSigma}} + \norm{\E\hat{\vSigma}-{\vSigma}}\ge t\) \\
		& = \P \(\norm{\hat{\vSigma}-\E\hat{\vSigma}} \ge t-\norm{\E\hat{\vSigma}-{\vSigma}}\).
	\end{align*}
	Assign
	\begin{align*}
	t_0 &=\norm{\E\hat{\vSigma}-{\vSigma}} + \frac{32\norm{\vLambda \vLambda^T}_F\delta+64\norm{\vLambda \vLambda^T} \delta^2}{m}\norm{\vSigma}\\
	&= \left| \frac{\tr(\vLambda \vLambda^T)}{m}-1 \right|  ||\vSigma||\\
	&\qquad \qquad \qquad  +
	\frac{32\norm{\vLambda \vLambda^T}_F\delta+64\norm{\vLambda \vLambda^T} \delta^2}{m}\norm{\vSigma}.
	\end{align*}
	It then follows from Theorem \ref{lm: wishart1} that for any $\delta \geq 0$
	$$
	\P \(\norm{\hat{\vSigma}-\vSigma} \ge t_0\) \le 2 \exp(-2\delta^2 + 2n\log3).
	$$	
	This completes the proof.

\end{IEEEproof}

In particular, if the shape matrix satisfy $\tr(\vLambda \vLambda^T)=m$ (see examples in Section \ref{examples}), then we have following corollary.

\begin{corollary}   \label{col: Main Theorem}

Let $\vx_1,\ldots,\vx_m \sim \mathcal{N}(\bm{0},\vSigma)$ be independent random vectors,  where $\vSigma$ is an $n \times n$ real positive definite matrix. Let $\bm{X}=[\vx_1,\ldots,\vx_m] \in \R^{n \times m}$. Consider the correlated samples $\vY=[\vy_1,\ldots,\vy_m]=\vX \vLambda $, where $\vLambda \in \R^{m \times m}$ is a fixed matrix such that $\tr(\vLambda \vLambda^T)=m$. Let the sample covariance matrix $\hat{\vSigma} = \frac{1}{m} \sum_{k=1}^m \vy_k \vy_k^T$. {Then for any $\delta \geq 0$, the event
$$\norm{\hat{\vSigma}-\vSigma}  \le \frac{32\norm{\vLambda \vLambda^T}_F\delta+64\norm{\vLambda \vLambda^T} \delta^2}{m}\norm{\vSigma}$$
holds with probability at least $1-2\exp(-2\delta^2+2n\log3)$.} Furthermore,
\begin{equation*}
\E \norm{\hat{\vSigma}-\vSigma} \le  \frac{72 \norm{\vLambda \vLambda^T}_{F} \sqrt{n} + 282 \norm{\vLambda \vLambda^T} n}{m} \norm{\vSigma}.
\end{equation*}

\end{corollary}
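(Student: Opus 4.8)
The plan is to derive this corollary directly from Theorem~\ref{thm: General} by specializing the hypothesis. The only difference between the corollary and Theorem~\ref{thm: General} is the extra assumption $\tr(\vLambda\vLambda^T)=m$, under which the bias-type term $\left|\frac{\tr(\vLambda\vLambda^T)}{m}-1\right|\norm{\vSigma}$ appearing in both the tail bound and the expectation bound of Theorem~\ref{thm: General} becomes exactly zero. Substituting $\tr(\vLambda\vLambda^T)/m=1$ into the two displayed inequalities of Theorem~\ref{thm: General} therefore yields precisely the tail bound $\norm{\hat{\vSigma}-\vSigma}\le\frac{32\norm{\vLambda\vLambda^T}_F\delta+64\norm{\vLambda\vLambda^T}\delta^2}{m}\norm{\vSigma}$ with the stated probability $1-2\exp(-2\delta^2+2n\log3)$, and the expectation bound $\E\norm{\hat{\vSigma}-\vSigma}\le\frac{72\norm{\vLambda\vLambda^T}_F\sqrt{n}+282\norm{\vLambda\vLambda^T}n}{m}\norm{\vSigma}$, which are exactly the claims of the corollary.

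Equivalently, one can reprove it from scratch in a couple of lines: write $\hat{\vSigma}-\vSigma=(\hat{\vSigma}-\E\hat{\vSigma})+(\E\hat{\vSigma}-\vSigma)$; as computed in the proof of Theorem~\ref{thm: General} we have $\E\hat{\vSigma}=\frac{\tr(\vLambda\vLambda^T)}{m}\vSigma$, which under $\tr(\vLambda\vLambda^T)=m$ equals $\vSigma$, so the second summand vanishes and $\hat{\vSigma}-\vSigma=\hat{\vSigma}-\E\hat{\vSigma}=\vW-\E\vW$ with $\vW=\frac{1}{m}\vX\vB\vX^T$ and $\vB=\vLambda\vLambda^T$. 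Since $\vB$ is symmetric, Theorem~\ref{lm: wishart1} applies verbatim and delivers both the high-probability estimate and the expectation estimate directly, with $\norm{\vB}_F=\norm{\vLambda\vLambda^T}_F$ and $\norm{\vB}=\norm{\vLambda\vLambda^T}$.

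There is essentially no obstacle here: the hypotheses of the corollary are a strict subset of those of Theorem~\ref{thm: General} (indeed every $\vLambda$ with $\tr(\vLambda\vLambda^T)=m$ is in particular a fixed $m\times m$ matrix, and $\vLambda\vLambda^T$ is automatically symmetric, so Theorem~\ref{lm: wishart1} is applicable), and no additional verification is required. The role of the corollary is simply to record the clean, bias-free form of the error bounds that arises in the practically relevant normalization $\tr(\vLambda\vLambda^T)=m$, which is exactly the regime covered by the Toeplitz-type examples referenced in Section~\ref{examples}.
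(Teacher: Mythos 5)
Your proof is correct and is exactly the paper's (implicit) argument: the corollary is presented as an immediate specialization of Theorem~\ref{thm: General}, where the hypothesis $\tr(\vLambda\vLambda^T)=m$ annihilates the bias term $\left|\tr(\vLambda\vLambda^T)/m-1\right|\norm{\vSigma}$ in both the tail and expectation bounds. Your alternative two-line derivation via $\E\hat{\vSigma}=\vSigma$ and Theorem~\ref{lm: wishart1} is also valid and is essentially the same computation unwound.
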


\subsection{Examples}\label{examples}
In this subsection, we present some examples to illustrate our theoretical results.

\begin{example}[Independent samples]
	In this case, $\vLambda \vLambda^T=\vI_m$, where $\vI_m$ is the $m$-dimensional identity matrix. It is easy to verify that $\tr(\vI_m)=m$, $\norm{\vI_m}_{F}=\sqrt{m}$, and  $\norm{\vI_m}=1$. It then follows from Corollary \ref{col: Main Theorem} that
	\begin{equation} \label{leq: I}
	\mathbb{E}\norm{\hat{\vSigma}-\vSigma} \le \( 72 \sqrt{\frac{n}{m}} + \frac{282 n}{m} \) \norm{\vSigma}.
	\end{equation}
	It is clear that $m=O(n)$ samples are sufficient to estimate the covariance matrix in this case. This result is consistent with Proposition \ref{pp: sg}.
\end{example}

\begin{example}[Partially correlated samples] In this case, a typical model for the shape parameter is that $\vLambda \vLambda^T$ is a symmetric Toeplitz matrix
	\begin{equation*}
	\vLambda \vLambda^T=\left[
	\begin{array}{cccccc}
	1      & \theta      & \cdots & \theta^{m-1}  \\
	\theta      & 1      & \ddots & \vdots   \\
	\vdots & \ddots & \ddots & \theta        \\
	\theta^{m-1}& \cdots & \theta      & 1        \\
	\end{array}
	\right]:=\vT(\theta)
	\end{equation*}
	 with $0 < \theta < 1$. This model is very common in many applications. For instance, the lagged correlation between the returns in portfolio optimization \cite{burda2011applying} satisfies this model by setting $\theta=\exp(-1/\tau)$, where $\tau$ is the characteristic time. Obviously, $\tr(\vT(\theta))=m$. By Gershgorin circle theorem \cite[Theorem 7.2.1]{golub2012matrix}, we have
	\begin{equation*}
	\norm{\vT(\theta)} \le 1+2 \cdot \sum_{k=1}^{\infty} \theta^k=1+\frac{2\theta}{1-\theta}=\frac{1+\theta}{1-\theta},~ 0 < \theta < 1.
	\end{equation*}
	In addition,
	\begin{align*}\label{eq: T_F}
	||\vT(\theta)||_F^2 &=m+ 2 \cdot \sum_{k=1}^{m-1}(m-k)\theta^{2k}\\
	&=\frac{m(1+\theta^2)}{1-\theta^2} + \frac{2\theta^2(\theta^{2m}-1)}{(1-\theta^2)^2}  \le \frac{m(1+\theta^2)}{1-\theta^2},
	\end{align*}
    where that last inequality holds because $0 < \theta < 1$.
    It also follows from Corollary \ref{col: Main Theorem} that
	\begin{equation} \label{leq: T}
	\mathbb{E}\norm{\hat{\vSigma}-\vSigma} \le \(72 \sqrt{\frac{1+\theta^2}{1-\theta^2} \cdot \frac{n}{m}} + 282 \cdot \frac{1+\theta}{1-\theta} \cdot\frac{n}{m} \) \norm{\vSigma}.
	\end{equation}
	Therefore, we conclude that in this case, $m=O(n)$ correlated samples are also sufficient to accurately estimate the covariance matrix. The difference between the correlated case and the independent case is that for a given estimation accuracy of $\vSigma$, the former requires more samples than that of the latter. This is because the multiplicative coefficient in the error bound \eqref{leq: T} is larger than that in \eqref{leq: I}. Furthermore, the larger the $\theta$, the greater the multiplicative coefficient.
	
\end{example}

\begin{example}[Totally correlated samples]
	When the observed signal samples are totally correlated, for example, $\vy_k=\frac{1}{\sqrt{m}}\sum_{i=1}^{m}\vx_i$ for $k=1,\ldots,m$, which means that $\vLambda \vLambda^T$ is an all-one matrix
	\begin{equation*}
	\vLambda \vLambda^T=\left[
	\begin{array}{cccccc}
	1      & \cdots & 1        \\
	\vdots & \ddots & \vdots         \\
	1      & \cdots  & 1        \\
	\end{array}
	\right]:=\vTheta.
	\end{equation*}
	Standard calculation shows that $\tr(\vTheta)=m$, $\norm{\vTheta}_F=m$, and $\norm{\vTheta}=m$. By Corollary \ref{col: Main Theorem}, we have
	\begin{equation*}
	\E \norm{\hat{\vSigma}-\vSigma} \le  \(72 \sqrt{n} + 282 n \)\norm{\vSigma}.
	\end{equation*}
	This result indicates that when the samples are totally correlated, the error bounds are independent of the sample size $m$, which means that increasing $m$ will not reduce the estimation error.

\end{example}

\section{Simulation Results} \label{sec: Simulation}
In this section, we carry out some simulations to demonstrate our theoretical results.

Consider an $n \times m$ matrix $\vX$ whose entries are independently drawn from the standard Gaussian distribution. Let $\vY \in \R^{n \times m} $ satisfy $\vY=\vX \vLambda$.

{In the first experiment, we consider the case where the samples are independent but with time-variant scale factors, i.e., $\vLambda$ is a diagonal matrix with different diagonal entries. Let $\vLambda = \vP(\mu,\sigma) = \diag\{ \rho_1,\ldots,\rho_m\}$, where $\{\rho_i\}_{i=1}^m$ have independent Gaussian distribution with mean $\mu$ and standard deviation $\sigma$. We make simulations for four models: 1) $\vLambda =\vI$; 2) $\vLambda =\vP(\sqrt{3}/2,1/2)$; 3) $\vLambda =\vP(\sqrt{2}/2,\sqrt{2}/2)$; 4) $\vLambda =\vP(0,1)$. It is not hard to verify the four models satisfy  $\E\vLambda \vLambda^T=\vI$ and $\E\tr(\vLambda \vLambda^T)=m$.  We set $\eta=0.2$ and increase $n$ from 1 to 30. For a fixed $n$, we make $500$ trials and calculate the average of the minimum sample size $m$ which satisfies
$$
\frac{||\hat\vSigma - \vSigma||_F}{ ||\vSigma||_F} \le \eta.
$$

Fig. \ref{fig: sim1} shows the simulation results. It is not hard to find that the sample size is proportional to the signal dimension $n$ for the four models. With the increase of the standard deviation, the slope of the line also increases. This phenomenon can be explained by Theorem \ref{thm: General}: when the standard deviation increases, the average of both $\norm{\vLambda \vLambda^T}_F$ and $\norm{\vLambda \vLambda^T}$ will also increase, which leads to the increase of the slope.
\begin{figure}
	\centering
	\includegraphics[width=3.2in]{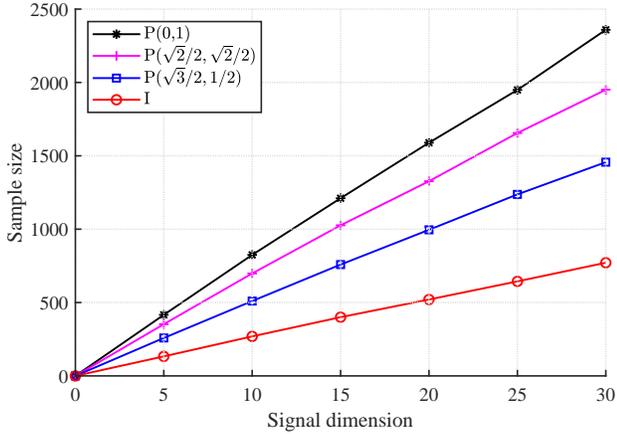}
	\caption{Sample size v.s. signal dimension for different independent models with time-variant scale factors.}
	\label{fig: sim1}
\end{figure}

In the second experiment, we consider the following four correlated models: 1) $\vLambda \vLambda^T=\vI$; 2) $\vLambda \vLambda^T=\vT(1/4)$; 3) $\vLambda \vLambda^T=\vT(1/2)$; 4) $\vLambda \vLambda^T=\vT(3/4)$. Let $n$ increase from $1$ to $30$ and $\eta=0.2$. We also make 500 Monte Carlo trials and calculate the average sample size for each fixed $n$ like the first experiment.

Fig. \ref{fig: sim2} reports the simulation results. We can easily see that the number of samples in the four cases is a linear function of the signal dimension $n$, which agrees with theoretical results (\ref{leq: I}) and (\ref{leq: T}). In addition, the larger the parameter $\theta$, the bigger the slope, which demonstrates that for a given estimation accuracy of $\vSigma$, the correlated case requires more samples than that of the less correlated one.
\begin{figure}
	\centering
	\includegraphics[width=3.2in]{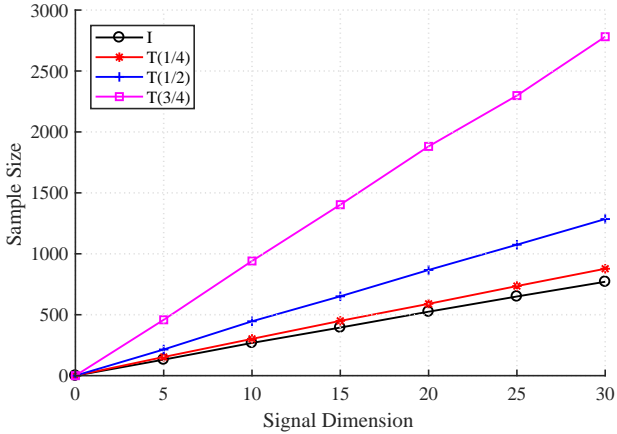}
	\caption{Sample size v.s. signal dimension for different correlated models.}
	\label{fig: sim2}
\end{figure}

In the third experiment, we compare the divergence rate of theoretical results (Theorem \ref{thm: General}) and Monte Carlo simulation results for four correlated models: 1) $\vLambda \vLambda^T=\vI$; 2) $\vLambda \vLambda^T=\vT(1/4)$; 3) $\vLambda \vLambda^T=\vT(1/2)$; 4) $\vLambda \vLambda^T=\vT(3/4)$. We set $n=15$ and increase $m$ from 50 to 1000 with step 50. For a fixed sample size $m$, we make 500 Monte Carlo trials and calculate the logarithm (base 10) of the average of estimation error $||\hat{\vSigma}-\vSigma||$.

The results are presented in Fig. \ref{fig: sim3}. From these results, we can know that for both theoretical and simulation results, the curves of four models are nearly parallel, which means that the four models have very similar error divergence rate. The results agree with Theorem \ref{thm: General}. However, we also observe a big gap between theoretical estimation errors and simulation estimation errors. This is because we have made a number of loose estimates in order to obtain a clear statement of the proofs, which leads to the fact that our theoretical bounds are not optimal in terms of multiplicative coefficients.

\begin{figure}
	\centering
	\includegraphics[width=3.2in]{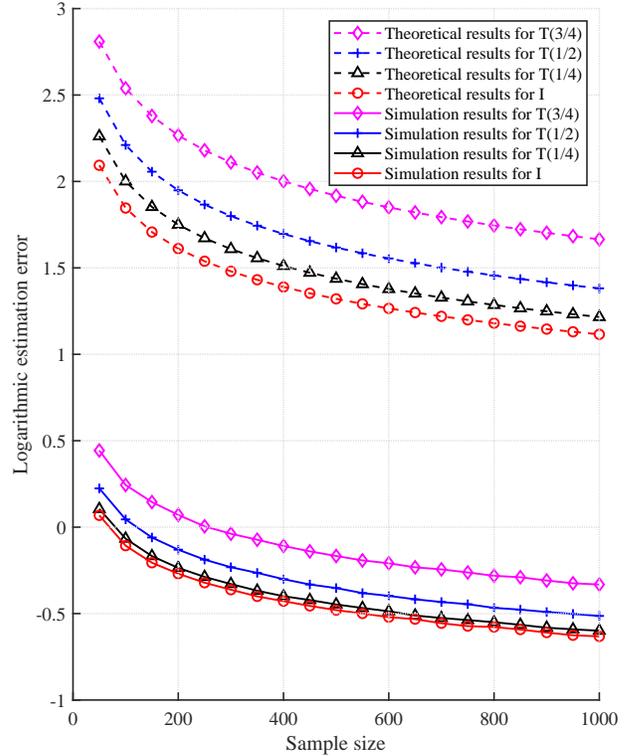}
	\caption{Logarithmic estimation error v.s. sample size for theoretical and simulation results under different models.}
	\label{fig: sim3}
\end{figure}

\section{conclusion} \label{sec: Conclusion}
In this paper, we have presented a non-asymptotic analysis for the covariance matrix estimation from { linearly-}correlated Gaussian samples. Our theoretical results have shown that the error bounds depend on the signal dimension $n$, the sample size $m$, and the shape parameter $\vB$ of the distribution of the correlated sample covariance matrix. In particular, when the shape parameter is a class of Toeplitz matrix (which is of great practical interest), $O(n)$ samples are sufficient to faithfully estimate the covariance matrix from correlated samples. This result has demonstrated that it is possible to estimate covariance matrices from moderate correlated samples.

For future work, it would be of great practical interest to extend the theoretical analysis for correlated samples from Gaussian distribution to other distributions such as sub-Gaussian, heavy tailed, and log-concave distributions. { In addition, it is of great importance to investigate the performance of other estimators under correlated samples. Examples include structured estimators, regularized estimators, and so on.}


%

\appendices

\section{Proof of Theorem \ref{lm: wishart1}} \label{sec: Proof1}
To prove Theorem \ref{lm: wishart1}, we require some useful definitions and facts. Without loss of generality, we assume $\vSigma= \vI$, otherwise we can use $\vSigma^{-1/2} \vX$ instead of ${\vX}$ to verify the general case.

\begin{definition}[$\e$-net]
	Let $\mathcal{K} \subset \R^n$ and $\e>0$. A subset $\mathcal{N} \subset \mathcal{K}$ is called an $\e$-net of $\mathcal{K}$ if
	$$\forall~\vx \in \mathcal{K}, ~~\exists~\vx_0 \in \mathcal{N} ~\text{such that}~ \norm{\vx-\vx_0}_2 \le \e.$$
\end{definition}

\begin{definition}[sub-Gaussian] A random variable $x$ is said to be sub-Gaussian with variance proxy $\sigma^2$ (denoted as $x \sim \text{\rm{subG}}(\sigma^2)$) if $\,\E x=0 $ and
	$$
	\E\exp(sx) \le \exp\(\frac{\sigma^2s^2}{2}\), ~ \forall \,\, s \in \R.
	$$	
\end{definition}
{ Equivalent definitions of sub-Gaussian random variables can be found in \cite[Proposition 2.5.2]{vershynin2017high}. Typical sub-Gaussian random variables include Gaussian variables, Bernoulli variables and any bounded random variables.
}
\begin{definition}[sub-exponential] A random variable $x$ is said to be sub-exponential with parameter $K$ (denoted as $x \sim \text{\rm{subE}}(K)$) if $\,\E x=0 $ and
	$$
	\E\exp(sx) \le \exp\(\frac{K^2s^2}{2}\), ~ \forall \,\, |s|\le \frac{1}{K}.
	$$
\end{definition}
{ Equivalent definitions of sub-exponential random variables can be found in \cite[Proposition 2.7.1]{vershynin2017high}. All sub-Gaussian random variables and their squares are sub-exponential. In addition, exponential and Poisson random variables belong to sub-exponential variables.
}

\begin{fact}[Exercise 4.4.3 and Corollary 4.2.13, \cite{vershynin2017high}] \label{lm: covering} Let $\vA \in \R^{m \times n}$ and $\e \in [0,1/2)$. Then for any $\e$-net $\mathcal{N}$ of the unit sphere $\mathbb{S}^{n-1}$ and $\e$-net $\mathcal{M}$ of the unit sphere $\mathbb{S}^{m-1}$, we have
	$$
	\norm{\vA} \le \frac{1}{1-2\e} \sup \limits_{\vx \in \mathcal{N}, \vy \in \mathcal{M}} \ip{\vA \vx}{\vy},
	$$
	{where $\ip{\cdot}{\cdot}$ denotes the inner product between two vectors, i.e. $\ip{\va}{\vb}=\va^T\vb.$}
	If $m=n$ and $\vA$ is symmetric, then
	$$
	\norm{\vA} \le \frac{1}{1-2\e} \sup \limits_{\vx \in \mathcal{N}} |\ip{\vA \vx}{\vx}|.
	$$
	Furthermore, there exist $\e$-nets $\mathcal{N}$ and $\mathcal{M}$ with cardinalities
	$$
	|\mathcal{N}| \le \(1+\frac{2}{\e}\)^n~~  \textrm{and} ~~|\mathcal{M}| \le \(1+\frac{2}{\e}\)^m.
	$$
\end{fact}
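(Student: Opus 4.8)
The plan is to prove this as the standard discretization (\emph{\(\e\)-net}) estimate for the spectral norm, in three stages: the bilinear approximation bound, its symmetric specialization, and the volumetric cardinality count. I would start from the variational characterization
$$
\norm{\vA} = \sup_{\vx \in \mathbb{S}^{n-1},\, \vy \in \mathbb{S}^{m-1}} \ip{\vA\vx}{\vy},
$$
which holds because the spectral norm is the largest singular value of $\vA$. Since both spheres are compact and the bilinear form is continuous, a maximizing pair $(\vx^*,\vy^*)$ exists, and I would work with it throughout.

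For the first inequality I would pick net points $\vx_0 \in \mathcal{N}$ and $\vy_0 \in \mathcal{M}$ with $\norm{\vx^* - \vx_0}_2 \le \e$ and $\norm{\vy^* - \vy_0}_2 \le \e$, then use the telescoping identity
$$
\ip{\vA\vx^*}{\vy^*} = \ip{\vA\vx_0}{\vy_0} + \ip{\vA(\vx^* - \vx_0)}{\vy^*} + \ip{\vA\vx_0}{\vy^* - \vy_0}.
$$
Because the net points lie on the unit spheres, Cauchy--Schwarz bounds each of the two error terms by $\e\norm{\vA}$, giving $\norm{\vA} \le \sup_{\vx\in\mathcal{N},\,\vy\in\mathcal{M}} \ip{\vA\vx}{\vy} + 2\e\norm{\vA}$; rearranging yields the stated factor $1/(1-2\e)$.

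For the symmetric case I would invoke the fact that for symmetric $\vA$ one has $\norm{\vA} = \sup_{\vx \in \mathbb{S}^{n-1}} |\ip{\vA\vx}{\vx}|$, i.e. the operator norm equals the largest eigenvalue in absolute value, realized by the Rayleigh quotient via the spectral theorem. A single net $\mathcal{N}$ then suffices: splitting $\ip{\vA\vx^*}{\vx^*} - \ip{\vA\vx_0}{\vx_0}$ into $\ip{\vA(\vx^* - \vx_0)}{\vx^*} + \ip{\vA\vx_0}{\vx^* - \vx_0}$ and bounding both terms by $\e\norm{\vA}$ as before reproduces the same $1/(1-2\e)$ factor.

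Finally, for the cardinality estimate I would use a packing argument: take a maximal $\e$-separated subset of $\mathbb{S}^{n-1}$, which is automatically an $\e$-net by maximality. The open balls of radius $\e/2$ centred at its points are pairwise disjoint and all contained in the ball of radius $1 + \e/2$, so comparing volumes gives $|\mathcal{N}|\,(\e/2)^n \le (1+\e/2)^n$, hence $|\mathcal{N}| \le (1 + 2/\e)^n$, and identically for $\mathcal{M}$. The only non-routine ingredient is the symmetric-case identity $\norm{\vA} = \sup_{\vx}|\ip{\vA\vx}{\vx}|$, which rests on the spectral theorem; once that is in hand the two approximation bounds are immediate, and the volumetric count is entirely standard.
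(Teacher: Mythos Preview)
Your proposal is correct and follows the standard argument. Note that the paper does not supply its own proof of this fact; it is stated with a citation to Vershynin's textbook (Exercise~4.4.3 and Corollary~4.2.13), and your three-stage argument---the bilinear telescoping bound, the symmetric specialization via the Rayleigh-quotient identity $\norm{\vA}=\sup_{\vx}|\ip{\vA\vx}{\vx}|$, and the volumetric packing count---is essentially the proof given there.
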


\begin{fact}[Lemma 1.4, \cite{rigollet2018high}] \label{lm: subE_property}
	Let $x\sim \text{\rm{subG}} (\sigma^2)$, then for any $k\ge 1$,
	$$
	\E |x|^k \le (2\sigma^2)^\frac{k}{2} k \Gamma\(\frac{k}{2}\),
	$$
	where $\Gamma(z) = \int_{0}^{\infty}x^{z-1}e^{-x}dx$.
\end{fact}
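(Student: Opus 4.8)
The plan is to pass from the moment-generating-function hypothesis to a sub-Gaussian tail bound, and then to recover the moment bound by integrating that tail against the layer-cake identity $\E|x|^k = \int_0^\infty k t^{k-1}\,\Pr{|x| > t}\,dt$, which holds for every $k \ge 1$. The only two ingredients are thus a two-sided tail estimate and a single change of variables that converts the resulting integral into a Gamma function.

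First I would establish the tail bound by the Chernoff method. For any $s > 0$, Markov's inequality applied to $e^{sx}$ together with the defining inequality $\E\exp(sx) \le \exp(\sigma^2 s^2 / 2)$ gives $\Pr{x \ge t} \le \exp(\sigma^2 s^2/2 - st)$; optimizing over $s$ by taking $s = t/\sigma^2$ yields $\Pr{x \ge t} \le \exp(-t^2/(2\sigma^2))$. Because $-x$ is sub-Gaussian with the same variance proxy (its transform satisfies $\E\exp(-sx) = \E\exp((-s)x) \le \exp(\sigma^2 s^2/2)$), the identical argument controls the lower tail, and a union bound produces the two-sided estimate $\Pr{|x| \ge t} \le 2\exp(-t^2/(2\sigma^2))$ for all $t \ge 0$.

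Substituting this tail bound into the layer-cake identity gives $\E|x|^k \le 2k \int_0^\infty t^{k-1}\exp(-t^2/(2\sigma^2))\,dt$. I would then evaluate the integral through the substitution $u = t^2/(2\sigma^2)$, under which $t = (2\sigma^2 u)^{1/2}$ and $t^{k-1}\,dt = \tfrac{1}{2}(2\sigma^2)^{k/2}\,u^{k/2-1}\,du$. This reduces the right-hand side to $k(2\sigma^2)^{k/2}\int_0^\infty u^{k/2-1}e^{-u}\,du = (2\sigma^2)^{k/2}\,k\,\Gamma(k/2)$, which is precisely the asserted bound.

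I expect no essential obstacle here, since the argument is the standard passage between tail decay and moments. The only place demanding care is the bookkeeping in the change of variables---tracking the power $(2\sigma^2)^{k/2}$ and the exponent $k/2 - 1$ so that the integral lands exactly on $\Gamma(k/2)$ and not on a shifted argument such as $\Gamma((k+1)/2)$---together with the routine check that both the layer-cake representation and the final integral converge for every real $k \ge 1$, which is immediate because the Gaussian factor decays faster than any power of $t$.
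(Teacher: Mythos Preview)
Your argument is correct and is exactly the standard derivation: Chernoff to obtain the two-sided sub-Gaussian tail $\Pr{|x|\ge t}\le 2\exp(-t^2/(2\sigma^2))$, then the layer-cake identity $\E|x|^k=\int_0^\infty k t^{k-1}\Pr{|x|>t}\,dt$, followed by the substitution $u=t^2/(2\sigma^2)$ to land on $\Gamma(k/2)$. The bookkeeping you flagged is handled correctly.

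As for comparison with the paper: the paper does not supply its own proof of this fact. It is stated as Fact~\ref{lm: subE_property} with a direct citation to \cite[Lemma~1.4]{rigollet2018high}, and the appendix only reproves Facts~\ref{lm: subG_subE} and~\ref{lm: Bernstein}. Your proof is precisely the argument given in the cited source, so there is no discrepancy to discuss.
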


\begin{fact}[Lemma 1.12, \cite{rigollet2018high}] \label{lm: subG_subE}
	Let $x \sim \text{\rm{subG}} (\sigma^2) $, then the random variable $z=x^2-\E [x^2]$ is sub-exponential with $z \sim \text{\rm{subE}}(8\sigma^2)$.
\end{fact}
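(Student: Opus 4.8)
The plan is to verify the defining inequality of a sub-exponential variable directly from the moment generating function (MGF). Writing $z = x^2 - \E[x^2]$, I first observe that $\E z = 0$ holds automatically, so the centering requirement in the definition of sub-exponential is free, and the whole task collapses to proving the MGF bound $\E \exp(sz) \le \exp\bigl( (8\sigma^2)^2 s^2 / 2 \bigr)$ for every $s$ with $|s| \le 1/(8\sigma^2)$. Note that the admissible range $|s| \le 1/(8\sigma^2)$ already coincides with the reciprocal of the target parameter $K = 8\sigma^2$, so the constant $8\sigma^2$ is essentially pinned down by the radius of convergence of the power series I am about to sum; this gives the argument its natural target.

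The engine of the proof is the even-moment control supplied by Fact \ref{lm: subE_property}. Applying it with exponent $2k$ gives $\E x^{2k} = \E|x|^{2k} \le (2\sigma^2)^{k}(2k)\Gamma(k) = 2\,k!\,(2\sigma^2)^k$ for every $k \ge 1$, and the same fact at $k=2$ (or a Taylor comparison of the sub-Gaussian MGF at the origin) bounds $\mu := \E x^2$ by a constant multiple of $\sigma^2$. I then convert these raw moments of $x^2$ into central moments of $z$. The transparent route is a triangle inequality in $L^k$: since $\mu = \E[x^2] \le (\E x^{2k})^{1/k}$ by Jensen, one has $(\E|z|^k)^{1/k} \le (\E x^{2k})^{1/k} + \mu \le 2(\E x^{2k})^{1/k}$, whence $\E|z|^k \le 2^k\,\E x^{2k} \le 2\,k!\,(4\sigma^2)^k$.

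With the central moments in hand I Taylor-expand the MGF as $\E \exp(sz) = 1 + s\,\E z + \sum_{k \ge 2} s^k \E[z^k]/k!$, discard the linear term because $\E z = 0$, and bound the remaining terms by $|\E z^k| \le \E|z|^k$. The factorials cancel against the $1/k!$, leaving a geometric series $1 + 2\sum_{k\ge 2}(4\sigma^2|s|)^k$; on the range $|s| \le 1/(8\sigma^2)$ its ratio is at most $1/2$, so the tail is dominated by its leading term and I obtain $\E \exp(sz) \le 1 + C\sigma^4 s^2$ for an explicit constant $C$. A final application of $1 + u \le e^{u}$ converts this into the Gaussian-type MGF bound $\E \exp(sz) \le \exp(C\sigma^4 s^2)$, which is exactly the sub-exponential estimate once $C$ is reconciled with $(8\sigma^2)^2/2 = 32\sigma^4$.

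The main obstacle I anticipate is the constant bookkeeping: getting the parameter down to exactly $8\sigma^2$ rather than merely $O(\sigma^2)$. The crude $L^k$-triangle step above inflates the moments by a factor $2^k$, which propagates to a geometric ratio $4\sigma^2|s|$ and overshoots the stated parameter by a factor of order $\sqrt{2}$. Closing this gap requires a sharper treatment of the centering — for instance showing $\E|z|^k \le 2\,k!\,(2\sigma^2)^k$ by exploiting that centering a nonnegative variable does not increase its even moments and handling the odd-$k$ correction from $\mu$ separately — so that the effective ratio becomes $2\sigma^2|s|$; then at $|s| \le 1/(8\sigma^2)$ the series sums to at most $1 + \tfrac{32}{3}\sigma^4 s^2 \le \exp(32\sigma^4 s^2)$, giving precisely $\text{\rm{subE}}(8\sigma^2)$. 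Everything else is routine summation; the delicate point is squeezing these constants to match the stated bound.
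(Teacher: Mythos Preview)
Your proposal is correct and follows essentially the same route as the paper: Taylor-expand the MGF of $z$, bound the central moments $\E|z|^k$ by a constant times $\E x^{2k}$, invoke Fact~\ref{lm: subE_property} to get $\E x^{2k}\le 2\,k!\,(2\sigma^2)^k$, sum the resulting geometric series on the range $|s|\le 1/(8\sigma^2)$, and finish with $1+u\le e^u$. The only substantive difference is the inequality used for the centering step: the paper bounds $(x^2-\mu)^k$ pointwise via $((a-b)/2)^k\le((a+b)/2)^k\le(a^k+b^k)/2$ and then applies Jensen to absorb $\mu^k$ into $\E x^{2k}$, whereas you use the $L^k$ triangle inequality together with $\mu\le(\E x^{2k})^{1/k}$; both yield $\E|z|^k\le 2^k\E x^{2k}$ and hence the same series $1+2\sum_{k\ge2}(4\sigma^2 s)^k$. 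Your candid remark about the constants is apt---the paper's own bookkeeping is similarly tight (it silently drops a factor of~$2$ when applying Fact~\ref{lm: subE_property}), so you are not losing anything relative to the reference argument.
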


\begin{fact}[Bernstein's inequality, Theorem 2.8.2, \cite{vershynin2017high}] \label{lm: Bernstein} Let $x_1, \ldots, x_m$ be independent random variables with $\E x_i =0$ and $x_i \sim \text{\rm{subE}}(K)$, and $\va=(a_1,\ldots,a_m)\in \R^m$. Define $S_m=\frac{1}{m} \sum_{i=1}^{m}a_ix_i$. Then for any $t>0$, we have
	\begin{equation*}
		\P\(\left|S_m\right|\ge t\) \le 2 \exp\(-\frac{1}{2} \min \left\{\frac{m^2t^2}{K^2 \norm{\va}_2^2}, \frac{mt}{K\norm{\va}_\infty}\right\}\),
	\end{equation*}
	where $\norm{\va}_2^2=\sum_{i=1}^{m}a_i^2$ and $\norm{\va}_\infty=\max_i|a_i|$.
\end{fact}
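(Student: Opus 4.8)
The plan is to prove the one-sided tail bound for $\mathbb{P}(S_m \ge t)$ by the classical Chernoff (exponential-moment) method, and then to recover the two-sided statement by a symmetrization and union bound. Throughout I would work directly from the defining property of sub-exponentiality stated above, namely that $x_i \sim \text{subE}(K)$ means $\E\exp(u x_i) \le \exp(K^2 u^2/2)$ for all $|u| \le 1/K$.

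First I would fix $s > 0$, exponentiate, and apply Markov's inequality, using independence of the $x_i$ to factor the moment generating function of the sum:
$$\mathbb{P}\(S_m \ge t\) = \mathbb{P}\(\sum_{i=1}^m a_i x_i \ge mt\) \le e^{-smt}\prod_{i=1}^m \E\exp(s a_i x_i).$$
I would then bound each factor by invoking the sub-exponential hypothesis with $u = s a_i$. The estimate $\E\exp(s a_i x_i) \le \exp(K^2 s^2 a_i^2/2)$ is admissible precisely when $|s a_i| \le 1/K$ for every $i$, i.e. when $s \le 1/(K\norm{\va}_\infty)$. Under this box constraint the product reduces to $\exp(K^2 s^2 \norm{\va}_2^2/2)$, yielding
$$\mathbb{P}\(S_m \ge t\) \le \exp\(-smt + \tfrac{1}{2}K^2 s^2 \norm{\va}_2^2\), \qquad 0 < s \le \frac{1}{K\norm{\va}_\infty}.$$

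The substantive step is the constrained minimization of the exponent over $s$. The unconstrained minimizer of $-smt + \tfrac12 K^2 s^2\norm{\va}_2^2$ is $s^\star = mt/(K^2\norm{\va}_2^2)$, with value $-m^2t^2/(2K^2\norm{\va}_2^2)$, so I would split into two regimes according to whether $s^\star$ satisfies the constraint. If $s^\star \le 1/(K\norm{\va}_\infty)$, taking $s = s^\star$ gives the sub-Gaussian exponent $-m^2t^2/(2K^2\norm{\va}_2^2)$, and the defining inequality of this regime rearranges to $m^2t^2/(K^2\norm{\va}_2^2) \le mt/(K\norm{\va}_\infty)$, so this is exactly $-\tfrac12$ times the smaller of the two quantities. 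If instead $s^\star > 1/(K\norm{\va}_\infty)$, I would set $s$ at the boundary $1/(K\norm{\va}_\infty)$; there the regime condition forces the residual quadratic term to be at most half the linear term, so the exponent is bounded by $-mt/(2K\norm{\va}_\infty)$, again $-\tfrac12$ times the (now smaller) linear branch. In both cases the bound merges into $\mathbb{P}(S_m \ge t) \le \exp(-\tfrac12\min\{m^2t^2/(K^2\norm{\va}_2^2),\, mt/(K\norm{\va}_\infty)\})$.

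Finally, since the sub-exponential moment bound is even in $u$, each $-x_i$ is also $\text{subE}(K)$, so the identical argument applied to $-S_m$ controls the lower tail $\mathbb{P}(S_m \le -t)$ by the same expression. Adding the two one-sided estimates and absorbing their sum into a factor of $2$ gives the stated inequality for $\mathbb{P}(|S_m| \ge t)$. I expect the only delicate point to be the case split in the optimization: one must check that at the boundary value of $s$ the leftover quadratic contribution is genuinely dominated by the linear one — which is where the defining inequality of the second regime is used — and that the two regimes correspond exactly to the two branches of the $\min$.
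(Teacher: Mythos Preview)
Your proof is correct and follows essentially the same route as the paper's: Chernoff's method, the sub-exponential MGF bound under the constraint $s \le 1/(K\norm{\va}_\infty)$, optimization in $s$, and symmetrization for the lower tail. The paper normalizes $K=1$ at the outset and simply takes $s=\min\{mt/\norm{\va}_2^2,\,1/\norm{\va}_\infty\}$ without writing out the case split, whereas you carry $K$ throughout and argue the two regimes separately; the content is the same.
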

Here facts \ref{lm: subG_subE} and \ref{lm: Bernstein} are derived by slightly modifying the original results in \cite{rigollet2018high} and \cite{vershynin2017high} respectively. For the convenience of the reader, we include the detailed proofs in Appendix \ref{appdix: Proof_facts}.

We are now in position to prove Theorem \ref{lm: wishart1}. For clarity, the proof is divided into several steps.
\begin{itemize}
	\item[1)] {\bf{Problem reduction}}. Let $\vB = \vU\vD_B\vU^T$ be the spectral decomposition of the symmetric matrix $\vB$, where $\vD_{\vB}=\diag \{\l_1,\l_2,\ldots,\l_m\}$ is a diagonal matrix whose entries are the eigenvalues of $\vB$, and $\vU$ is an orthonormal matrix. Then we have
	\begin{align} \label{eq: Problem reduction}
		&~~~~\P(\norm{\vW-\E\vW}  \ge t) \nonumber\\
		&= 	\P\(\frac{1}{m}\norm{\vX \vU \vD_B \vU^T\vX^T-\E[\vX\vU \vD_B\vU^T \vX^T]}\ge t\) \nonumber\\
		&= 	\P\(\frac{1}{m}\norm{\vX \vD_B \vX^T-\E[\vX \vD_B \vX^T]}\ge t\) \nonumber\\
		&= 	\P\(\norm{\frac{1}{m} \sum_{i=1}^{m}\l_i \(\vx_i \vx_i^T-\vI_n\)}\ge t\),
	\end{align}
	where the second equality holds because the Gaussian matrix $\vX$ is orthogonally invariant.
	\item[2)] {\bf{Approximation}}. Choose $\e=1/4$. By Fact \ref{lm: covering}, we get
	\begin{align*}
		&\norm{\frac{1}{m} \sum_{i=1}^{m}\l_i \(\vx_i \vx_i^T-\vI_n\)} \\
		&\quad \quad\le 2 \sup \limits_{\vu \in \mathcal{N}} \left|\frac{1}{m} \sum_{i=1}^{m}\l_i\ip{ \(\vx_i \vx_i^T-\vI_n\) \vu}{\vu} \right|\\
		&\quad \quad= 2 \sup \limits_{\vu \in \mathcal{N}} \left|\frac{1}{m} \sum_{i=1}^{m}\l_i (\ip{ \vx_i}{\vu}^2-1)\right|,
	\end{align*}
	where $\mathcal{N}$ is a ${1}/{4}$-net of $\mathbb{S}^{n-1}$ with $|\mathcal{N}| \le 9^n$. Thus we have
	\begin{multline*}
		\P\(\norm{\frac{1}{m} \sum_{i=1}^{m}\l_i \(\vx_i \vx_i^T-\vI_n\)}\ge t\)
		\\ \le \P\( \sup \limits_{\vu \in \mathcal{N}} \left|\frac{1}{m} \sum_{i=1}^{m}\l_i (\ip{ \vx_i}{\vu}^2-1)\right| \ge \frac{t}{2} \).
	\end{multline*}
	\item[3)] {\bf{Concentration}}. Fix $\vu  \in \mathcal{N}$, we are going to bound
	$$\P\( \left|\frac{1}{m} \sum_{i=1}^{m}\l_i (\ip{\vx_i}{\vu}^2-1)\right| \ge \frac{t}{2} \).
	$$
	By assumption, $\ip{\vx_i}{\vu}$ are independent Gaussian random variables with mean zero and variance $\E\ip{\vx_i}{\vu}^2=1$. Thus we have $\ip{\vx_i}{\vu} \sim \text{\rm{subG}}(1)$. From Fact \ref{lm: subG_subE}, we know $(\ip{\vx_i}{\vu}^2-1)$ are independent sub-exponential variables with mean zero and $(\ip{\vx_i}{\vu}^2-1) \sim \text{subE}(8)$. By using Bernstein's inequality (Fact \ref{lm: Bernstein}), we have
	\begin{multline*}
		\P\( \left|\frac{1}{m} \sum_{i=1}^{m}\l_i (\ip{\vx_i}{\vu}^2-1)\right| \ge \frac{t}{2} \) \\
		\le 2 \exp\(-\frac{1}{32} \min \left\{\frac{m^2t^2}{16 \sum_{i=1}^{m}\l_i^2}, \frac{mt}{\max_i|\l_i|}\right\}\).
	\end{multline*}
	Since $\norm{\vB}_F^2=\sum_{i=1}^{m}\l_i^2$ and $\norm{\vB}=\max_i|\l_i|$, we obtain
	\begin{multline*}
		\P\( \left|\frac{1}{m} \sum_{i=1}^{m}\l_i (\ip{\vx_i}{\vu}^2-1)\right| \ge \frac{t}{2} \) \\
		\le 2 \exp\(-\frac{1}{32} \min \left\{\frac{m^2t^2}{16 \norm{\vB}_F^2}, \frac{mt}{ \norm{\vB}}\right\}\).
	\end{multline*}
	\item[4)] {\bf{Tail bound}}. Taking union bound for all $\vu \in \mathcal{N}$ yields
	\begin{multline*}
		\P\( \sup \limits_{\vu \in \mathcal{N}} \left|\frac{1}{m} \sum_{i=1}^{m}\l_i (\ip{ \vx_i}{\vu}^2-1)\right| \ge t/2 \) \\
		\le 9^n \cdot 2 \exp\(-\frac{1}{32} \min \left\{\frac{m^2t^2}{16 \norm{\vB}_F^2}, \frac{mt}{\norm{\vB}}\right\}\).
	\end{multline*}
	Assigning
	$$t=\frac{32\norm{\vB}_F\delta+64\norm{\vB}\delta^2}{m}:=t_1,$$ we obtain
	\begin{multline*}
		\P\(\sup \limits_{\vu \in \mathcal{N}} \left|\frac{1}{m} \sum_{i=1}^{m}\l_i (\ip{ \vx_i}{\vu}^2-1)\right| \ge \frac{t_1}{2}\) \\
		\le 9^n \cdot 2 \exp\(-2 \delta^2 \) = 2 \exp\(- 2\delta^2 + 2n\log3\).
	\end{multline*}

   Therefore, we show that, for any $\delta \geq 0$,
	\begin{multline*}
		\P\(\norm{\vW-\E\vW}  \ge \frac{32\norm{\vB}_F\delta+64\norm{\vB}\delta^2}{m}\)\\
		\le 2 \exp\(- 2\delta^2 + 2n\log3\).
	\end{multline*}

   In particular, if $\delta\ge \sqrt{2\log3}\sqrt{n}$, we have
	\begin{multline*}
		\P\(\norm{\vW-\E\vW}  \ge \frac{32\norm{\vB}_F\delta+64\norm{\vB}\delta^2}{m}\)\\
		\le 2 \exp\(- \delta^2\),
	\end{multline*}
   which is useful to establish the expectation bound.

	\item[5)] {\bf{Expectation bound}}. For any $\l \ge \sqrt {2\log3} \sqrt{n}$, we have
	\begin{align*}
		&~~~~\E\norm{\vW-\E\vW }\\
		&= \int_0^\infty {\mathbb{P}(\norm{\vW-\E\vW} \ge t) {\rm{d}} t}\\
		&=\frac{32}{m} \int_0^\infty {\mathbb{P}(\norm{\vW-\E\vW} \ge t)\(\norm{\vB}_{F} + 4\delta\norm{\vB}\) {\rm{d}} \delta}\\
		&\le \frac{32}{m} \int_0^\l 1 \cdot (\norm{\vB}_{F}+ 4\delta \norm{\vB}) {\rm{d}} \delta \\
		&~~~ + \frac{64}{m}\int_\l^\infty \exp \(-\delta^2\) (\norm{\vB}_{F}+ 4\delta \norm{\vB}) {\rm{d}} \delta,
	\end{align*}
	where the first equality follows from the integral identity, in the second inequality we have let $t = (32\norm{\vB}_F\delta+64\norm{\vB}\delta^2)/{m}$.
	We continue by scaling and variable replacement as follows
	\begin{align*}
		&~~~~\E\norm{\vW-\E\vW }\\
		& \le \frac{32}{m}(\norm{\vB}_{F} \l + 2\norm{\vB} \l^2)\\
		&~~~ + \frac{64}{m}\int_\l^\infty \exp \(-\delta^2\) \(\frac{\delta\norm{\vB}_{F}}{\lambda}+ 4\delta \norm{\vB}\) {\rm{d}} \delta\\
		& \le \frac{32}{m}(\norm{\vB}_{F} \l + 2\norm{\vB} \l^2)\\
		&~~~ + \frac{32}{m}\int_0^\infty \exp \( - x\) \(\frac{\norm{\vB}_{F}}{\l}+ 4\norm{\vB}\) {\rm{d}} x\\
		& = \frac{32}{m}\(\norm{\vB}_{F} \l + 2\norm{\vB} \l^2 +\frac{\norm{\vB}_{F}}{\l}+ 4\norm{\vB}\)\\
		& \le \frac{32}{m}\[\norm{\vB}_{F} \l + 2\norm{\vB} \l^2 +\(\frac{\norm{\vB}_{F}}{\l}+ 4\norm{\vB}\)\cdot \frac{\l^2}{2}\]\\
		& \le \frac{48\norm{\vB}_{F} \l + 128\norm{\vB} \l^2}{m}.
	\end{align*}
	The last inequality follows from $ \lambda^2 \ge 2(\log 3) n \ge 2$. Choosing $\l = \sqrt {2\log3} \sqrt{n} $, we obtain
	\begin{align} \label{neq: E}
		&\E\norm{\vW-\E\vW } \\&\le \frac{48 \sqrt{2\log 3} \norm{\vB}_{F} \sqrt{n} + 256 \log 3 \norm{\vB} n}{m}\\
		&\le \frac{72\norm{\vB}_{F} \sqrt{n} + 282 \norm{\vB} n}{m},
	\end{align}
	which completes the proof.
\end{itemize}

\section{Proof of Facts} \label{appdix: Proof_facts}

\subsection{Proof of Fact \ref{lm: subG_subE}}
In this proof, we slightly improve the result of \cite[Lemma 1.12]{rigollet2018high}. By the Taylor expansion, we have
$$
\E[\exp(s(x^2-\E [x^2]))] = 1 + \sum_{k=2}^{\infty} \frac{s^k \E[x^2-\E [x^2]]^k}{k!}.
$$
Due to the convexity of $x^k$ for $x>0$ and $k \ge 1$, it follows from Jensen's inequality that
$$
\(\frac{x^2-\E [x^2]}{2}\)^k \le \(\frac{x^2+\E [x^2]}{2}\)^k \le \frac{x^{2k}+(\E [x^2])^k}{2}.
$$
By using the above inequality and Jensen's inequality again, we obtain
\begin{align*}
\E \exp(s(x^2-\E [x^2])) &\le 1 + \sum_{k=2}^{\infty} \frac{s^k 2^{k-1} \(\E[x^{2k}]+\(\E [x^2]\)^k\)}{k!}\\
&\le 1 + \sum_{k=2}^{\infty} \frac{s^k 2^{k} \E[x^{2k}]}{k!}.
\end{align*}
By using Fact \ref{lm: subE_property}, if $|s| \le \frac{1}{8\sigma^2}$, we have
\begin{align*}
\E \exp(s(x^2-\E [x^2])) &\le 1 + \sum_{k=2}^{\infty} \frac{s^k 2^{k} (2\sigma^2)^k k!}{k!}\\
&\le 1 + \sum_{k=2}^{\infty}(4s\sigma^2)^k \\
&\le 1 +32 s^2 \sigma^4\\
&\le \exp\(\frac{(8\sigma^2)^2s^2}{2}\).
\end{align*}
According to the definition of sub-exponential random variable, we have
$(x^2-\E [x^2]) \sim \text{\rm{subE}}(8 \sigma^2)$.

\subsection{Proof of Fact \ref{lm: Bernstein}}
The proof is developed from \cite[Theorem 2.8.2]{vershynin2017high} and \cite[Theorem 1.13]{rigollet2018high} with explicit constant.
Without loss of generality, we assume that $K=1$, otherwise we can replace $x_i$ by $x_i/K$ and $t$ by $t/K$ to verify the general result. By using the Chernoff bound, for all $s>0$, we have
\begin{align*}
	\P(S_m \ge t ) &\le \exp(-smt) \E \exp\(s \sum_{i=1}^{m}a_i x_i\)\\
	&= \exp(-smt) \prod_{i=1}^{m} \E \exp\(s a_i x_i\).
\end{align*}
According to the definition of sub-exponential, if $|s|\le {1}/{|a_i|}$, we have
$$
	\E \exp(s a_i x_i) \le \exp\(\frac{s^2a_i^2}{2}\).
$$
In order to make the above inequality hold for all $i$, we have $|s| \le 1/\norm{\va}_\infty$. So we have
\begin{align*}
\P(S_m \ge t ) &= \exp(-smt) \prod_{i=1}^{m} \exp\(\frac{s^2a_i^2}{2}\)\\
&=\exp\(\frac{\norm{\va}_2^2}{2}s^2-smt\).
\end{align*}	
Choosing $$s=\min\left\{\frac{mt}{\norm{\va}_2^2},\frac{1}{\norm{\va}_\infty}\right\}$$
yields
	\begin{equation*}
	\P\(S_m\ge t\) \le \exp\(-\frac{1}{2} \min \left\{\frac{m^2t^2}{K^2 \norm{\va}_2^2}, \frac{mt}{K\norm{\va}_\infty}\right\}\).
	\end{equation*}
We can obtain the same bound for $\P\(S_m \le -t\)$ by replacing $x_i$ by $-x_i$, which completes the proof.

\section*{Acknowledgment}
The authors thank the anonymous referees and the Associate Editor for useful comments which have helped to improve the presentation of this paper.

\ifCLASSOPTIONcaptionsoff
  \newpage
\fi


\end{document}